\documentclass{article}

\usepackage[letterpaper]{geometry}

\interdisplaylinepenalty=250
\def\journal{0}			
\usepackage[normalem]{ulem}



\PassOptionsToPackage{hyphens}{url}
\usepackage[colorlinks=true,citecolor=blue,linkcolor=blue,linktocpage]{hyperref}
\usepackage{makecell}

\usepackage[linesnumbered,ruled,vlined]{algorithm2e}
\SetKwInput{KwInput}{Input}                
\SetKwInput{KwOutput}{Output}
\usepackage{cite}


\usepackage{drawmatrix}
\usepackage{tikz}
\usepackage{xcolor}
\usepackage[export]{adjustbox}
\usetikzlibrary{matrix,positioning}
\usepackage{graphicx}
\usepackage{caption}
\usepackage[labelformat=simple]{subcaption}

\usepackage{epstopdf}
\usepackage{epsfig}
\usepackage{amssymb}
\usepackage{amsmath}
\usepackage{amsthm}
\usepackage{latexsym}
\usepackage{setspace}
\usepackage{bbm}
\usepackage{flushend}
\usepackage{float}
\usepackage{dsfont}
\usepackage{algorithmic,algorithm2e}
\usepackage{xspace}
\usepackage{enumerate}

\usepackage{comment}

\usepackage[title]{appendix}

\hyphenation{subgaussian}  

\newcommand\numberthis{\addtocounter{equation}{1}\tag{\theequation}}


\newcommand{\bPr}[1]{{\mathrm{Pr}}\left(#1\right)}

\newcommand{\bEE}[1]{{\mathbb{E}}\left[#1\right]}

\newcommand{\cL}{{\mathcal L}}

\newcommand{\cN}{{\mathcal N}}
\newcommand{\mN}{{\mathbbm N}}
\newcommand{\mR}{{\mathbbm R}}

\newcommand{\cS}{{\mathcal S}}

\makeatletter
\newtheorem*{rep@theorem}{\rep@title}
\newcommand{\newreptheorem}[2]{%
\newenvironment{rep#1}[1]{%
 \def\rep@title{#2 \ref{##1}}%
 \begin{rep@theorem}}%
 {\end{rep@theorem}}}
\makeatother

\newtheorem{theorem}{Theorem}

\newtheorem{corollary}[theorem]{Corollary}
\newtheorem*{corollary*}{Corollary}
\newtheorem{assumption}{Assumption}

\newtheorem*{assumptions*}{Assumptions}
\newtheorem{lemma}[theorem]{Lemma}
\newtheorem*{lemma*}{Lemma}

\newtheorem{lemma-app}{Lemma}[section]

\newreptheorem{theorem}{Theorem}
\newreptheorem{lemma}{Lemma}

\theoremstyle{remark}

\newtheorem*{remark*}{Remark}
\newtheorem*{remarks*}{Remarks}

\theoremstyle{definition}
\newtheorem{definition}{Definition}


\newcommand{\ed}{\stackrel{{\rm def}}{=}}

\def\undertilde#1{\mathord{\vtop{\ialign{##\crcr
$\hfil\displaystyle{#1}\hfil$\crcr\noalign{\kern1.5pt\nointerlineskip}
$\hfil\tilde{}\hfil$\crcr\noalign{\kern1.5pt}}}}}




\allowdisplaybreaks

\ifnum\journal=1
	\includecomment{onlycnfr}
	\excludecomment{onlyarxiv}
\else
	\includecomment{onlyarxiv}
	\excludecomment{onlycnfr}
\fi







\begin{document}
\title{Phase Transitions for Support Recovery from Gaussian Linear Measurements}
\date{}
\author{{Lekshmi Ramesh} \and
  {Chandra R. Murthy} \and {Himanshu
    Tyagi} }

\maketitle

{\renewcommand{\thefootnote}{}\footnotetext{
\noindent The authors are with the Department of Electrical
Communication Engineering, Indian Institute of Science, Bangalore
560012, India.  Email: \{lekshmi, cmurthy, htyagi\}@iisc.ac.in.} }


{\renewcommand{\thefootnote}{}\footnotetext{
		\noindent \thanks{This work was financially supported by a PhD fellowship from the Ministry of Electronics and Information Technology, Govt. of India, and by research grants from the Aerospace Network Research Consortium and the Center for Networked Intelligence (CNI) at the Indian Institute of Science.}} }
		
{\renewcommand{\thefootnote}{}\footnotetext{
		\noindent \thanks{To appear in ISIT 2021.}} }

\begin{abstract}
  We study the problem of recovering the common $k$-sized support of a set of $n$ samples of dimension $d$, using $m$ noisy linear measurements per sample.
  Most prior work has focused on the case when $m$ exceeds $k$, in which case
      $n$ of the order $(k/m)\log(d/k)$ is both necessary and sufficient.
      Thus, in this regime,
      only the total number of measurements across the samples matter, and there
      is not much benefit in getting more than $k$ measurements per sample.
  In the measurement-constrained regime where we have access to fewer
  than $k$ measurements per sample, we show an upper bound of
  $O((k^{2}/m^{2})\log d)$ on the sample complexity for successful support recovery when $m\ge 2\log d$. Along with the lower bound from our previous work, this
  shows a phase transition for the sample complexity of this problem
  around $k/m=1$. In fact, our proposed algorithm is sample-optimal in both the regimes. It follows that, in the $m\ll k$ regime,
    multiple measurements from the same sample are more valuable
    than measurements from different samples.
  
\end{abstract}
\section{Introduction}\label{sec:introduction}
The problem of support recovery in the single sample setting
considers the following question: given noisy linear measurements
$Y=\Phi x+ W\in\mR^{m}$ of a $k$-sparse vector $x\in\mR^{d}$, can we
recover the locations of its nonzero entries when $m<d$? The set of
indices corresponding to the nonzero entries of $x$ is called the
support of $x$, and is denoted by $\mathtt{supp}(x)$. The measurement matrix
$\Phi\in\mR^{m\times d}$ is a design parameter that is chosen to
enable exact or approximate recovery of $\mathtt{supp}(x)$, and $W\sim\cN(0,\sigma^{2} I)$ is noise. This
problem (also sometimes referred to as model selection or variable
selection) has received a lot of attention in the past decade
\cite{Wainwright_TIT_2009}, \cite{Fletcher_TIT_2009},
\cite{Aeron_TIT_2010}, \cite{Reeves_ISIT_2008},
\cite{Ndaoud_TIT_2020}, with a focus on designing recovery algorithms
and on determining the number of measurements $m$ required for
successful recovery. In particular, it is known that $m=\Theta(k\log(d-k))$ measurements are necessary and sufficient for support recovery
with high probability using a Gaussian measurement matrix \cite{Wainwright_TIT_2009}. 
It is important to note that this tight scaling holds in the low signal to noise ratio (SNR) regime of $x_{\min}/\sigma^{2}=\Theta(1/k)$, where $x_{\min}\ed\min_{i\in\cS}x_{i}$. In other regimes of SNR, either the $\log$ dependence changes or the upper and lower bounds are known to differ by a factor of $(\log(1+k x_{\min}^{2}/\sigma^{2}))^{-1}$; see \cite{Ndaoud_TIT_2020} for a detailed discussion.

Parallel to the results in the single sample setting, there has been work on the natural extension of this problem to the multiple sample setting, which is the focus of this work.
In this setting, there are multiple samples $x_{1},\ldots,x_{n}$, all sharing a common unknown support $\cS$ of cardinality $k$. For each sample $x_{i}$, we observe measurements $Y_{i}=\Phi_{i}x_{i}+W_{i}$, and the goal is to recover $\cS$.
We can ask the question of how the number of measurements per sample $m$ and the number of samples $n$ can be traded-off for each other, and whether it is useful to take more samples or more measurements per sample.

While there have been several works in the multiple sample setting  \cite{Wipf_TSP_2007}, \cite{Tang_TIT_2010}, \cite{Foucart_SAMPTA_2011}, \cite{Jin_TIT_2013}, \cite{Scarlett_TIT_2017}, \cite{Park_TIT_2017}, they focus on the regime where one has access to roughly $m\ge k$ measurements per sample. In particular, omitting the dependence on SNR, \cite{Park_TIT_2017} shows that $mn=\Theta(k\log(d/k))$ is necessary and sufficient assuming $m=\Omega(k)$ and $k=o(d)$. While the sufficient condition in \cite{Park_TIT_2017} is obtained via analysis of an exhaustive search decoder, algorithms such as the group LASSO also show a similar scaling of $mn=\Theta(k\log(d-k))$ provided $m>k$ \cite{Obozinski_annals_stats_2011}. From the discussion in the previous paragraph, it is clear that if we have $m=\Omega(k\log (d-k))$, then a single sample is sufficient for support recovery. 
Therefore, given that we have access to multiple samples now, a more interesting question to consider is whether we can perform recovery with $m<k$ measurements per sample. 
This measurement-constrained regime has received some attention in the past \cite{Balkan_SPL_2014}, \cite{Pal_TSP_2015}, \cite{Ramesh_ISIT_2019} and it was recently shown for the case of \emph{random} inputs drawn from a subgaussian distribution that the tradeoff (ignoring noise variance and parameters dependent on the generative model for the samples) is $n=\Theta((k^{2}/m^{2})\log d)$ for $(\log k)^{2}\le m<k/2$ \cite{Ramesh_arxiv_2019}.

\begin{figure}[t]
\centering 
\includegraphics[height=7cm,width=7cm]{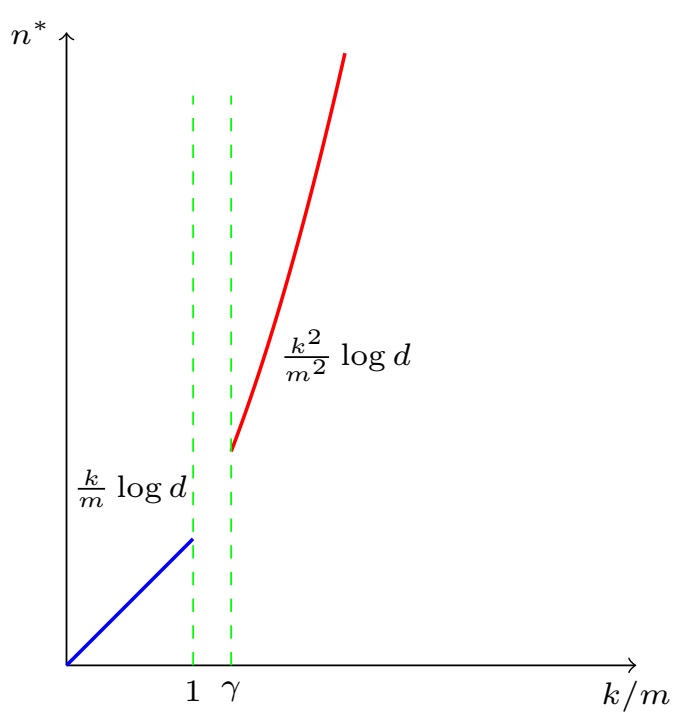}

\caption{Sample complexity of support recovery as a function of $k/m$.}
\end{figure}\label{fig:ptx}


In this work, we focus on the case of deterministic
inputs $x_{i}$ {with $|x_{ij}|\in[x_{\min},x_{\max}]$, $j\in\cS$,} and show that the tradeoff identified in~\cite{Ramesh_ISIT_2019}
for Gaussian inputs
holds for the worst-case setting as well. In particular, the lower bound from \cite{Ramesh_arxiv_2019} for Gaussian inputs applies to this worst-case setting, since an
instantiation in the Gaussian case can be thought of as a deterministic input.
Showing that the upper bound also remains the same requires more work, and is the main focus of this paper. 
Specifically, we analyze the performance of the estimator from \cite{Ramesh_arxiv_2019} in the deterministic input setting, which requires improved bounds on the tail probability of heavy-tailed random variables\footnote{We refer to a random variable $X$ as heavy-tailed if its moment generating function $\bEE{e^{\lambda(X-\bEE{X})}}$ is infinite for all $\lambda\in\mR$.} than the one used in \cite{Ramesh_arxiv_2019}.

In summary, we settle the question of tradeoff between $m$ and $n$ in the $m<k$ regime, and show that there exists a phase transition for the sample complexity of this problem at $k/m=1$ as depicted in Figure \ref{fig:ptx}. Roughly, around this point, the sample complexity for support recovery undergoes a change from being linear in the ratio $k/m$ to being quadratic in $k/m$ (up to a factor $\log d$).

We note that the current lower bound proof from \cite{Ramesh_arxiv_2019} requires some separation between $k$ and $m$; namely, it requires $k/m>\gamma$ for some $\gamma>1$.
While the lower bound of $n=\Omega((k/m)\log (d/k))$ \cite{Park_TIT_2017} continues to hold for $m<k$, it is not clear if a tighter lower bound on sample complexity in the regime $1< k/m\le \gamma$ can be obtained. Such a separation between $k$ and $m$ is, however, not required when deriving the upper bound. 


%
%
%
%
%
%
%
%
%
%
%
%
%



\textit{Notation.} We use upper case letters to denote either random variables or deterministic matrices, and lowercase letters to denote deterministic scalars or vectors. For a vector $x$, we use $x_{i}$ to denote its $i$th entry, and for matrices $\{A_{i}\}_{i=1}^{n}$ we use $A_{ij}$ to denote the $j$th column of $A_{i}$. We use $\|x\|_{p}^{p}\ed\sum_{i=1}^{d}x_{i}^{p}$ to denote the $\ell_{p}$ norm of a vector $x\in\mR^{d}$, and $\|Z\|_{\cL_{p}}^{p}\ed\bEE{|Z|^{p}}$ to denote the $\cL_{p}$ norm of a random variable $Z$.

\textit{Organization.} In the next section, we formally state the problem and present our main result. We present the proof of our main result in Section \ref{sec:upper_bound}, and end with a discussion on directions for further work in Section \ref{sec:discussion}.
\begin{onlycnfr}
Proofs that are omitted here can be found in \cite{LR-CM-HT-21a}.
\end{onlycnfr}
\section{Problem formulation and main result} \label{sec:formulation}
Let vectors $x_{1},\ldots,x_{n}$ in $\mR^{d}$ have a common support
$\cS\subset[d]$ of cardinality $k$. For each of these vectors, we have
access to noisy linear measurements of the form
$Y_{i}=\Phi_{i}x_{i}+W_{i}$, $i\in[n]$. Here, $\Phi_{i}\in\mR^{m\times
  d}$ with $m<d$ are called the measurement matrices and
$W_{i}\stackrel{iid}{\sim}\cN(0,\sigma^{2}I)$ is noise. 
The goal is to
recover the support $\cS$ using $\{Y_{i},\Phi_{i}\}_{i=1}^{n}$.  An estimator for $\cS$ is a mapping $\hat{\cS}:\mR^{m\times n}\times \mR^{m\times d\times n}\rightarrow \binom{[d]}{k}$, where $\binom{[d]}{k}$ denotes the set of all subsets of $[d]$ of cardinality $k$.
We assume that the estimator has knowledge of $k$ and 
consider the probability of exact recovery, $\bPr{\hat{\cS}\ne \cS}$, as our recovery criterion.
We note that one could also consider the setting where $|\cS|\le k$. The  estimator that we consider here would output an $\hat{\cS}$ that contains the true support with high probability.
In this work, however, we assume that the true support has cardinality exactly $k$.

We make the following two assumptions on the measurement matrices and the input samples:
\begin{assumption}\label{assump_phi}
	The {$m\times d$} measurement matrices
        $\Phi_{1},\ldots,\Phi_{n}$ are independent, with entries
        that are independent and distributed as $\cN(0,1/m)$.
\end{assumption}
\begin{assumption}\label{assump_x}
The $d$-dimensional inputs $x_{1},\ldots,x_{n}$ are such that $\mathtt{supp}(x_{i})=\cS$, for all $i\in[n]$, where $\cS\subset[d]$ is a fixed set of cardinality $k$. Further, 
$|x_{iu}|\in[x_{\min},x_{\max}]$, for all $i\in[n]$, $u\in\cS$, 
 where $x_{\min}, x_{\max}\in\mR$.
\end{assumption}

We focus on a measurement-constrained setting where we obtain only $m<k$ measurements per sample. The fundamental quantity of interest for us in this paper is the {\em sample complexity of support recovery,} defined below.
\begin{definition}
For $m,k,d\in\mN$, the sample complexity of support recovery $n^{*}(m,k,d,x_{\min},x_{\max},\sigma^{2},\delta)$ is the minimum number of samples $n$ for which we can find an estimator that can recover $\cS$ with probability of error at most $\delta$. Mathematically,
\begin{align}\label{eq:est_err}
\bPr{\hat{\cS}\ne\cS}\le\delta,\quad \forall \cS\in\binom{[d]}{k}.
\end{align}
\end{definition}

For notational convenience, we use $n^{*}(m,k,d,x_{\min},x_{\max},\sigma^{2},\delta)=n^{*}$ in the rest of the paper.
Our main result shows a phase transition that occurs at $k/m=1$ for the problem of support recovery. In particular, the dependence of sample complexity of support recovery on $k/m$
undergoes a sharp change from linear to quadratic
 as we move from the $k/m\le 1$ regime to the $k/m>1$ regime.
As mentioned before, a tight characterization of $n^{*}$ in the interval $1<k/m\le\gamma$ is not known, although our upper bound stated in Theorem \ref{thm} below continues to hold for this case. 
Our main result is the following.
\begin{theorem}\label{thm}
The sample complexity of support recovery under Assumptions \ref{assump_phi} and \ref{assump_x}, for $m\ge 2\log (d/\delta)$, satisfies
\begin{align*}
n^{*}
=O\bigg(\frac{x_{\max}^{4}}{x_{\min}^{4}}\max\bigg\{\bigg(\frac{k}{m}+\frac{\sigma^{2}}{x_{\max}^{2}}\bigg)\log \frac{d}{\delta},
\bigg(\frac{k}{m}+\frac{\sigma^{2}}{x_{\max}^{2}}\bigg)^{2}\log\frac{d}{\delta}\bigg\}\bigg).
\end{align*}
\end{theorem}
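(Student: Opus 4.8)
The plan is to analyze the covariance-based estimator of \cite{Ramesh_arxiv_2019}: form the per-coordinate statistics
\[
T_u \ed \frac1n \sum_{i=1}^n \langle \Phi_{i,u}, Y_i\rangle^2, \qquad u \in [d],
\]
where $\Phi_{i,u}$ denotes the $u$-th column of $\Phi_i$, and output $\hat{\cS}$ as the $k$ indices with the largest $T_u$ (equivalently, $T_u$ is the $u$-th diagonal entry of $\frac1n\sum_i \Phi_i^{\top}Y_iY_i^{\top}\Phi_i$). First I would reduce correctness of this top-$k$ rule to a uniform deviation statement: writing $Z_{iu} \ed \langle\Phi_{i,u},Y_i\rangle^2$ and $\mu_u \ed \bEE{T_u}$, exact recovery holds on the event $\{\max_{u\in[d]} |T_u - \mu_u| < x_{\min}^2/2\}$, so it suffices to show this event has probability at least $1-\delta$.

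Second, I would compute the means and verify the separation. Using Assumption~\ref{assump_phi} together with independence of $W_i$, a direct second-moment computation gives $\bEE{Z_{iu}} = x_{iu}^2(1+1/m) + \|x_i\|^2/m + \sigma^2$ for $u\in\cS$ and $\bEE{Z_{iu}} = \|x_i\|^2/m + \sigma^2$ for $u\notin\cS$. Averaging over $i$, the off-support value $\mu_u = \overline{\|x\|^2}/m + \sigma^2$ (with $\overline{\|x\|^2}\ed\frac1n\sum_i\|x_i\|^2$) is identical across all $u\notin\cS$, so the quantity that matters is the gap $\min_{u\in\cS}\mu_u - \max_{u\notin\cS}\mu_u \ge (1+1/m)\,\frac1n\sum_{i} x_{iu}^2 \ge x_{\min}^2$, which is exactly what the deviation threshold $x_{\min}^2/2$ exploits.

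Third, and this is where the shape of the bound originates, I would control the per-sample variance $\Var{Z_{iu}}$. Writing $\xi_{iu}\ed\langle\Phi_{i,u},Y_i\rangle = c_{iu} + \eta_{iu}$ with $c_{iu}\ed\bEE{\xi_{iu}}$ (equal to $x_{iu}$ on-support and $0$ off-support) and $\eta_{iu}$ zero-mean with $\Var{\eta_{iu}} \asymp \frac{k x_{\max}^2}{m} + \sigma^2 = x_{\max}^2\big(\frac km + \frac{\sigma^2}{x_{\max}^2}\big)$, the expansion $Z_{iu} = c_{iu}^2 + 2c_{iu}\eta_{iu} + \eta_{iu}^2$ yields $\Var{Z_{iu}} \lesssim c_{iu}^2\,\Var{\eta_{iu}} + \Var{\eta_{iu}^2}$. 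The first term, present only on-support, produces the piece linear in $\big(\frac km+\frac{\sigma^2}{x_{\max}^2}\big)$, while the second produces the quadratic piece. Hence $\max_u \Var{Z_{iu}} \lesssim x_{\max}^4\max\big\{\frac km+\frac{\sigma^2}{x_{\max}^2},\,\big(\frac km+\frac{\sigma^2}{x_{\max}^2}\big)^2\big\}$, and dividing by $(x_{\min}^2/2)^2$ reproduces the factor $\frac{x_{\max}^4}{x_{\min}^4}\max\{A,A^2\}$ of Theorem~\ref{thm}.

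Finally, I would turn the variance bound into a tail bound on $T_u-\mu_u$ at deviation $x_{\min}^2/2$, union bound over the $d$ coordinates, and set the total failure probability to $\delta$ to read off $n^*$. The main obstacle lives here: $Z_{iu}$ is the square of a quadratic form in Gaussians and is genuinely heavy-tailed (its moment generating function is infinite), so neither a sub-Gaussian nor a sub-exponential Bernstein inequality applies, and a crude Chebyshev bound would give a useless $\mathrm{poly}(d/\delta)$ instead of $\log(d/\delta)$ dependence. The resolution is the improved heavy-tailed concentration flagged in the introduction: under the hypothesis $m\ge 2\log(d/\delta)$, the quadratic forms $\|\Phi_{i,u}\|^2$ and $\langle\Phi_{i,u},\Phi_{i,v}\rangle$ concentrate at constant-factor scale with probability $1-(\delta/d)^{\Omega(1)}$, which suppresses the heavy tail of $Z_{iu}$ at the level demanded by the union bound and places the deviation $x_{\min}^2/2$ in the variance-dominated regime, where only $\max_u\Var{Z_{iu}}$ and a single $\log(d/\delta)$ enter. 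Certifying that the heavy-tailed contribution is dominated by the variance term precisely when $m\ge 2\log(d/\delta)$ is the technical crux, and is the step that must go beyond the tail bound used in \cite{Ramesh_arxiv_2019}.
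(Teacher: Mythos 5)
Your skeleton matches the paper's own proof closely: you analyze the same statistic $\tilde{\lambda}_{u}=\frac1n\sum_{i=1}^{n}(\Phi_{iu}^{\top}Y_{i})^{2}$, your mean computation and the resulting gap of at least $x_{\min}^{2}$ between on- and off-support coordinates are correct (the paper uses a threshold $\tau$ sitting between the two conditional means rather than a uniform deviation ball of radius $x_{\min}^{2}/2$, but the reductions are equivalent), and your decomposition $\Var{Z_{iu}}\lesssim c_{iu}^{2}\Var{\eta_{iu}}+\Var{\eta_{iu}^{2}}$ correctly predicts the $\max\{A,A^{2}\}$ shape with $A=k/m+\sigma^{2}/x_{\max}^{2}$. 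The genuine gap is at the step you yourself label the technical crux: the heavy-tailed concentration is asserted, not proved, and the mechanism you sketch --- uniform constant-factor concentration of the quadratic forms $\|\Phi_{iu}\|_{2}^{2}$ and $\langle\Phi_{iu},\Phi_{iv}\rangle$, each failing with probability $(\delta/d)^{\Omega(1)}$ --- does not suffice. First, a union bound over all $nd$ (or even $nk$) such forms at per-event failure $(\delta/d)^{\Omega(1)}$, which is what $m\ge 2\log(d/\delta)$ buys at constant deviation, gives total failure $nd\,(\delta/d)^{\Omega(1)}$, which need not be $\le\delta$ once $n$ is large; pushing per-event failure down to $\delta/(nd)$ forces deviations of order $\sqrt{\log(nd/\delta)/m}$, which are bounded but do \emph{not} shrink as $n$ grows. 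Second, and more fundamentally, even on such a good event the conditional mean of $T_{u}$ given the columns is itself random --- it involves $\frac1n\sum_{i}x_{iu}^{2}\|\Phi_{iu}\|_{2}^{4}$ and similar averages --- and its fluctuation eats directly into the gap $x_{\min}^{2}/2$. A uniform bound on the column norms controls that fluctuation only to the non-vanishing scale $\Theta(\sqrt{\log(nd)/m})$, whereas the argument needs the \emph{averages} $\frac1n\sum_{i}\|\Phi_{iu}\|_{2}^{4}$ and $\frac1n\sum_{i}\|\Phi_{iu}\|_{2}^{6}$ to concentrate at scale $\sqrt{\log(d/\delta)/n}$, vanishing with $n$.

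This is precisely the machinery the paper supplies and your outline omits. The paper conditions on the columns $\{\Phi_{iu}\}_{i=1}^{n}$ exactly, so that $\tilde{\lambda}_{u}$ is conditionally an average of noncentral chi-squared variables with explicit sub-exponential tails (Lemma \ref{lem:chi_squared}), turning correctness into the deterministic separation condition \eqref{eq:sep} on the ensemble; it then proves \eqref{eq:sep} holds with high probability via a moments-to-tails device --- from $\bPr{Z\ge e(\bEE{Z^{p}})^{1/p}}\le e^{-p}$, a bound $\|Z\|_{\cL_{p}}\le cp^{\beta}$ yields $\bPr{Z\ge t}\le\exp(-(t/ec)^{1/\beta})$ --- combined with Rosenthal's inequality (in Pinelis's form) for the $p$-th moments of the heavy-tailed sums. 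This produces Lemma \ref{lem:heavy_tailed}, giving the tail $\exp(-C\min\{nt,(m^{3}nt)^{1/4},nt^{2}\})$ for the empirical sixth moments, its fourth-power analogue Lemma \ref{lem:heavy_tailed_2}, and the separate max bound Lemma \ref{lem:max_chi_squared} for $\max_{i\in[n]}\|\Phi_{iu^{\prime}}\|_{2}^{2}$, which is the one place a uniform bound is genuinely used (and where $m\ge 2\log(d/\delta)$ enters to keep that term lower order). To complete your proof you would need to reproduce this Rosenthal-based concentration for the empirical fourth and sixth moments of the column norms, or replace it with an argument of comparable strength; truncation at constant scale alone does not deliver the ``variance-dominated, single $\log(d/\delta)$'' conclusion you invoke.
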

As a special case, in the noiseless setting with $m<k$, we have the following corollary, which follows from Theorem \ref{thm} above and the lower bound in \cite{Ramesh_arxiv_2019}\footnote{When $k\le d/2$ and $d\ge 4$, the $\log(k(d-k))$ factor in the lower bound is equal, upto constants, to $\log d$.}. Note that the lower bound is stated for a constant probability of error.
\begin{corollary}
In the noiseless setting, with $4\log 3d\le 2m<k\le d/2$, $d\ge 4$ and $\delta=1/3$, we have,
\begin{align*}
n^{*}
=\Theta\bigg(\frac{x_{\max}^{4}}{x_{\min}^{4}}\frac{k^{2}}{m^{2}}\log d\bigg).
\end{align*}
\end{corollary}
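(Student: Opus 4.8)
The plan is to establish the two matching bounds separately: the upper bound by specializing Theorem~\ref{thm} to the noiseless regime, and the lower bound by importing the result of \cite{Ramesh_arxiv_2019} and translating it to the worst-case deterministic setting.

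For the upper bound I would set $\sigma^{2}=0$ and $\delta=1/3$ in Theorem~\ref{thm}. First I check that the theorem's hypothesis holds: with $\delta=1/3$ the requirement $m\ge 2\log(d/\delta)$ reads $m\ge 2\log 3d$, which is precisely the assumption $4\log 3d\le 2m$. With $\sigma^{2}=0$ the term $\sigma^{2}/x_{\max}^{2}$ vanishes, so the bound collapses to $n^{*}=O\big((x_{\max}^{4}/x_{\min}^{4})\max\{(k/m)\log 3d,\,(k/m)^{2}\log 3d\}\big)$. Since $2m<k$ forces $k/m>2>1$, the quadratic term dominates the linear one and I keep only $(k/m)^{2}\log 3d$. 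Finally, for $d\ge 4$ one has $3d\le d^{2}$, hence $\log 3d\le 2\log d$, and the upper bound becomes $n^{*}=O\big((x_{\max}^{4}/x_{\min}^{4})(k^{2}/m^{2})\log d\big)$.

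For the lower bound I would invoke the bound of \cite{Ramesh_arxiv_2019}, established for Gaussian inputs at a constant probability of error, which applies to the present worst-case deterministic setting: as noted in Section~\ref{sec:introduction}, any realization of the Gaussian input model is itself a valid deterministic input, so the worst-case error over inputs is at least the average error under the Gaussian prior, and the latter is bounded below by that result. Under the hypotheses $2m<k$ (equivalently $m<k/2$) and $\delta=1/3$, this yields $n^{*}=\Omega\big((k^{2}/m^{2})\log(k(d-k))\big)$. Using $1\le k\le d/2$ and $d\ge 4$, the quantity $k(d-k)$ lies between $d-1$ and $d^{2}/4$, so $\log(k(d-k))=\Theta(\log d)$ (the content of the footnote), giving $n^{*}=\Omega\big((k^{2}/m^{2})\log d\big)$.

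Combining the two displays, and treating $x_{\max}/x_{\min}$ as a fixed constant so that the factor $x_{\max}^{4}/x_{\min}^{4}\ge 1$ is absorbed into the asymptotic constants, I obtain $n^{*}=\Theta\big((x_{\max}^{4}/x_{\min}^{4})(k^{2}/m^{2})\log d\big)$. The only genuinely delicate point is the lower-bound transfer: I must verify that the separation and dimension conditions under which \cite{Ramesh_arxiv_2019} proves its Gaussian lower bound are implied by the corollary's hypotheses ($2m<k$, $m\ge 2\log 3d$, $k\le d/2$, $d\ge 4$), and that the average-to-worst-case reduction loses at most a constant factor in the error probability. Everything else is a routine specialization of Theorem~\ref{thm}.
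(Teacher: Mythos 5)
Your proposal follows exactly the paper's route: the upper bound is the direct specialization of Theorem~\ref{thm} with $\sigma^{2}=0$, $\delta=1/3$ and dominance of the quadratic term since $k/m>2$, while the lower bound is imported from \cite{Ramesh_arxiv_2019} via the realization-of-Gaussian-inputs reduction stated in Section~\ref{sec:introduction}, with the footnote's observation that $\log(k(d-k))=\Theta(\log d)$ when $k\le d/2$ and $d\ge 4$. Your closing caveat about verifying the transfer conditions (e.g., the separation $k/m>\gamma$ noted in the paper) is the same implicit assumption the paper itself makes, so the argument is correct and essentially identical to the paper's.
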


We provide the proof of Theorem \ref{thm} in the next section.

\section{Analysis of the estimator}\label{sec:upper_bound}
We will analyze the closed form estimator from \cite{Ramesh_arxiv_2019}, but instead of random inputs, here we will consider deterministic inputs $x_{1},\ldots,x_{n}$. To see why the analysis in \cite{Ramesh_arxiv_2019} does not extend in a straightforward way to this case, we first recall the form of the estimator. Let $\Phi_{iu}\in\mR^{m}$ denote the $u$th column of $\Phi_{i}$. We first compute proxy samples $\hat{X}_{1},\ldots,\hat{X}_{n}$ with entries
\begin{align}\label{eq:xhat}
\hat{X}_{iu}\ed\Phi_{iu}^{\top}Y_{i}=\Phi_{iu}^{\top}\Phi_{i}x_{i}+\Phi_{iu}^{\top}W_{i},\quad u\in[d],
\end{align} 
and then the sample second moment along each coordinate as 
\begin{align}\label{eq:lambda_est}
\tilde{\lambda}_{u}\ed\frac{1}{n}\sum_{i=1}^{n}\hat{X}_{iu}^{2},\quad u\in[d].
\end{align}
The support estimate $\tilde{\cS}$ consists of the $k$ indices of $\tilde{\lambda}$ with the largest value. 
Analyzing the estimator would basically involve obtaining tail bounds for the random variable above. Considering the noiseless case first, note that each summand in \eqref{eq:lambda_est} is of the form $(\Phi_{iu}^{\top}\Phi_{i}x_{i})^{2}$, and can be viewed as a quadratic in either $x_{i}$ or $\Phi_{iu}^{\top}\Phi_{i}$.

When $x_{i}$s are random and subgaussian with independent coordinates, we can exploit the quadratic form in $x_{i}$ to obtain a tail bound using the Hanson-Wright inequality (after conditioning on $\Phi_{i}$).
On the other hand, when $x_{i}$ are deterministic, the summands in \eqref{eq:lambda_est} are quadratic in $\Phi_{iu}^{\top}\Phi_{i}$,  resulting in a heavy-tailed random variable, and standard methods based on bounding the moment generating function (MGF) do not work.

We explain in the next section how a careful analysis involving conditioning on a certain \emph{column} of $\Phi_{i}$ followed by a moment based bound can be used to get exponential tail bounds for heavy-tailed random variables. The analysis in \cite{Ramesh_arxiv_2019} also deals with heavy-tailed random variables, but using a more elementary approach (see [Lemma B.2, \cite{Ramesh_arxiv_2019}] and [Lemma B.3, \cite{Ramesh_arxiv_2019}] for instance) which would not work here.

\subsection{A separation condition for support recovery}
We will analyze the error probability of a threshold-based version of the estimator described in the previous section. In particular, we will use the estimate $\hat{\lambda}\ed\mathbbm{1}_{\{\tilde{\lambda}\ge\tau\}}$, for an appropriate threshold $\tau$, since 
$\bPr{\tilde{\cS}\ne \cS}\le \bPr{\hat{\cS}\ne \cS}$,
where $\hat{\cS}$ denotes the support of $\hat{\lambda}$. 
The error probability $\bPr{\hat{\cS}\ne \cS}$ will essentially be determined by the tail behaviour of the variance estimate $\tilde{\lambda}$. Recall from the last section that variance estimate is an average of random variables of the form $(\Phi_{iu}^{\top}\Phi_{i}x_{i}+\Phi_{iu}^{\top}W_{i})^{2}$. The $\Phi_{iu}^{\top}\Phi_{i}x_{i}$ term will be indicative of whether the coordinate $u$ lies in the support or not, since it will have a $\|\Phi_{iu}\|_{2}^{2}$
 term only when $u\in\cS$. 
 
The analysis is greatly simplified once we condition on $\Phi_{iu}$, because then the summands in \eqref{eq:lambda_est} are noncentral chi-square distributed, for which tail bounds can be obtained using standard methods. 
The error probability can be made small provided these tail probabilities (parameterized by $\Phi_{iu}$) can be made small, 
which eventually leads to a condition on the measurement ensemble. We will show, using tail bounds for heavy-tailed random variables, that this condition is satisfied with high probability for the Gaussian ensemble when the parameters $(n,m,k,d)$ scale as indicated in Theorem \ref{thm}, thus finishing the proof.

The probability of error can be bounded as
\begin{align}\label{eq:err}
\bPr{\hat{\cS}\ne\cS}
&\le \sum_{u\in\cS}\bPr{\tilde{\lambda}_{u}<\tau|E}
+\sum_{u^{\prime}\in\cS^{c}}\bPr{\tilde{\lambda}_{u^{\prime}}\ge\tau|E}+\bPr{E^{c}},
\end{align}
where $E$ denotes the event that the measurement ensemble satisfies a certain condition, which we will describe shortly.
For the right hand side to remain below $\delta$, we require the summands in the first two terms to be at most $\delta/(3\max\{k,d-k\})$. For simplicity, we will work with a requirement of $\delta/3d$.
 Now, using \eqref{eq:xhat} and \eqref{eq:lambda_est}, we can see that $\hat{X}_{iu}|\Phi_{iu}\sim\cN(\mu_{i},\nu_{i}^{2})$ for $u\in\cS$ with 
\begin{align*}
\mu_{i}=\|\Phi_{iu}\|_{2}^{2}x_{iu},
\end{align*}
and
\begin{align*}
\nu_{i}^{2}=\frac{\|\Phi_{iu}\|_{2}^{2}}{m}\sum_{v\in\cS\backslash\{u\}}x_{iv}^{2}+\sigma^{2}\|\Phi_{iu}\|_{2}^{2}.
\end{align*}
Similarly, we have $\hat{X}_{iu^{\prime}}|\Phi_{iu^{\prime}}\sim\cN(0,\nu_{i}^{\prime 2})$, for $u^{\prime}\in\cS^{c}$, where
\begin{align*}
\nu_{i}^{\prime 2}=\frac{\|\Phi_{iu^{\prime}}\|_{2}^{2}}{m}\sum_{v\in\cS}x_{iv}^{2}+\sigma^{2}\|\Phi_{iu^{\prime}}\|_{2}^{2}.
\end{align*}
A direct application of Lemma \ref{lem:chi_squared} then yields, for every $u\in\cS$,
\begin{align*}
\bPr{\tilde{\lambda}_{u}<\tau|\{\Phi_{iu}\}_{i=1}^{n}}
\le\exp\bigg(\frac{-n^{2}(\mu-\tau)^{2}}{4(\sum_{i=1}^{n}\nu_{i}^{4}+\nu_{i}^{2}\mu_{i}^{2})}\bigg),
\end{align*}
where $\mu\ed\bEE{\tilde{\lambda}_{u}|\{\Phi_{iu}\}_{i=1}^{n}}$. For $u^{\prime}\in\cS^{c}$, we can obtain in a similar manner from Lemma \ref{lem:chi_squared},
\begin{align*}
&\bPr{\tilde{\lambda}_{u^{\prime}}\ge\tau|\{\Phi_{iu^{\prime}}\}_{i=1}^{n}}
\le \exp\bigg(-\min\bigg\{\frac{n^{2}(\tau-\mu^{\prime})^{2}}{16\sum_{i=1}^{n}\nu_{i}^{\prime 4}},\frac{n(\tau-\mu^{\prime})}{8\max_{i\in[n]}\nu_{i}^{\prime 2}}\bigg\}\bigg),
\end{align*}
where $\mu^{\prime}\ed\bEE{\tilde{\lambda}_{u^{\prime}}|\{\Phi_{iu}\}_{i=1}^{n}}$. For the missed detection and false alarm probabilities above to remain bounded above by $\delta/3d$, we require
\begin{align*}
\tau \le \mu -\sqrt{\frac{4}{n^{2}}\sum_{i=1}^{n}(\nu_{i}^{4}+\mu_{i}^{2}\nu_{i}^{2})\log\frac{3d}{\delta}},
\end{align*}
and
\begin{align*}
\tau \ge \mu^{\prime}+\max\bigg\{\sqrt{\frac{16}{n^{2}}\sum_{i=1}^{n}\nu_{i}^{\prime 4}\log\frac{3d}{\delta}},\frac{8}{n}\max_{i\in[n]}\nu_{i}^{\prime 2}\log\frac{3d}{\delta}\bigg\}.
\end{align*}
Therefore, for the existence of a threshold $\tau$, we can see upon simplification that it suffices to have
\begin{align}\label{eq:sep_0}
\mu-\mu^{\prime}
> \sqrt{\frac{4}{n^{2}}\sum_{i=1}^{n}(\nu_{i}^{4}
+\nu_{i}^{2}\mu_{i}^{2})\log\frac{3d}{\delta}}
+\max\bigg\{\sqrt{\frac{16}{n^{2}}\sum_{i=1}^{n}\nu_{i}^{\prime 4}\log\frac{3d}{\delta}},\frac{8}{n}\max_{i\in[n]}\nu_{i}^{\prime 2}\log\frac{3d}{\delta}\bigg\}.
\end{align}
A simple calculation shows that the conditional mean of the estimator under the $u\in\cS$ and $u^{\prime}\in\cS^{c}$ cases are separated roughly by a constant term (after averaging over the measurement matrices), which makes the distinction between the two cases possible. In particular,
 \begin{align*}
 \mu=\frac{1}{n}\sum_{i=1}^{n}\bigg(x_{iu}^{2}\|\Phi_{iu}\|_{2}^{4} 
+\|\Phi_{iu}\|_{2}^{2} \bigg(\frac{1}{m}\sum_{v\in\cS\backslash\{u\}}x_{iv}^{2}+\sigma^{2}\bigg)\bigg),
 \end{align*}
  and
  \begin{align*}
   \mu^{\prime}=\frac{1}{n}\sum_{i=1}^{n}\|\Phi_{iu}\|_{2}^{2}\bigg(\frac{1}{m}\sum_{v\in\cS}x_{iv}^{2}+\sigma^{2}\bigg).
  \end{align*} 
Substituting this into \eqref{eq:sep_0} and simplifying, we can rewrite the condition as
\begin{align}\label{eq:sep}
&\frac{x_{\min}^{2}}{x_{\max}^{2}}\frac{1}{n}\sum_{i=1}^{n}\bigg(\|\Phi_{iu}\|_{2}^{4}-\frac{1}{m}\|\Phi_{iu}\|_{2}^{2}\bigg)
>
\sqrt{\frac{4}{n^{2}}\bigg(\frac{k-1}{m}+\frac{\sigma^{2}}{x_{\max}^{2}}\bigg)^{2}\sum_{i=1}^{n}\|\Phi_{iu}\|_{2}^{4}\log\frac{3d}{\delta}}\nonumber\\
&+\sqrt{\frac{4}{n^{2}}\bigg(\frac{k-1}{m}+\frac{\sigma^{2}}{x_{\max}^{2}}\bigg)\sum_{i=1}^{n}\|\Phi_{iu}\|_{2}^{6}\log\frac{3d}{\delta}}
+\sqrt{\frac{16}{n^{2}}\bigg(\frac{k}{m}+\frac{\sigma^{2}}{x_{\max}^{2}}\bigg)^{2}\sum_{i=1}^{n}\|\Phi_{iu^{\prime}}\|_{2}^{4}\log\frac{3d}{\delta}}\nonumber\\
&+\frac{8}{n}\bigg(\frac{k}{m}+\frac{\sigma^{2}}{x_{\max}^{2}}\bigg)\max_{i\in[n]}\|\Phi_{iu^{\prime}}\|_{2}^{2}\log\frac{3d}{\delta},
\end{align}
for every $(u,u^{\prime})\in\cS\times \cS^{c}$.

\subsection{Separation condition for the Gaussian ensemble}
We will show that when the measurement ensemble is Gaussian as described in Assumption \ref{assump_phi}, the separation condition in \eqref{eq:sep} is satisfied with high probability for a certain regime of the parameters $(n,m,k,d)$. 
We will derive upper and lower bounds on the right hand side and left hand side respectively in \eqref{eq:sep}, that hold with high probability, which after simplification will finally result in a condition on the parameters as stated in Theorem \ref{thm}. Note that this translates to obtaining tail bounds for the random variable $ (1/n)\sum_{i=1}^{n}\|\Phi_{iu}\|_{2}^{2q}$ with $q=2,3$. 
It is easy to see that $\|\Phi_{iu}\|_{2}^{2}$ is chi-square distributed (after scaling by $m$), and $\|\Phi\|_{2}^{2q}$ is therefore a heavy-tailed random variable, and so MGF based methods cannot be used here. We will see that a bound on the moments can be used to get exponential tail bounds (even when the MGF is unbounded).
\begin{onlyarxiv}
The proofs for results in this section can be found in Appendix \ref{app-proofs}.
\end{onlyarxiv}

We will fix $q=3$ and derive our results; the same arguments can be used for the $q=2$ case as well. Define $Z\ed |(1/n)\sum_{i=1}^{n}(\|\Phi_{iu}\|_{2}^{6}-\bEE{\|\Phi_{iu}\|_{2}^{6}}|$ and note that for all $p\ge 1$,
\begin{align}\label{eq:tail_bound_moment}
\bPr{Z\ge e(\bEE{Z^{p}})^{\frac{1}{p}}}
=\bPr{Z^{p}\ge e^{p}\bEE{Z^{p}}}\le e^{-p}. 
\end{align}
Further, for all $p\ge 2$, if we can show that $(\bEE{Z^{p}})^{\frac{1}{p}}\le cp^{\beta}$ for some $\beta>0$, then together with the previous inequality it implies that $\bPr{Z\ge ecp^{\beta}}\le e^{-p}$, or, equivalently, for $t>0$, that
\begin{align}\label{eq:tail_bound_moment_2}
\bPr{Z\ge t}\le \exp(-(t/ec)^{\frac{1}{\beta}}).
\end{align}
We now need to determine an upper bound on $\|Z\|_{\cL_{p}}\ed (\bEE{Z^{p}})^{\frac{1}{p}}$. We show such a moment bound, resulting in the following lemma. 
\begin{onlycnfr}
The proof can be found in \cite{LR-CM-HT-21a}.
\end{onlycnfr}

\begin{lemma}\label{lem:heavy_tailed}
For every $t>0$, there exists an absolute constant $C$ such that
\begin{align*}
\bPr{\bigg|\frac{1}{n}\sum_{i=1}^{n}(\|\Phi_{iu}\|_{2}^{6}-\bEE{\|\Phi_{iu}\|_{2}^{6}})\bigg|\ge t}
\le \exp\bigg(-C\min\bigg\{nt,(m^{3}nt)^{\frac{1}{4}},nt^{2}\bigg\}\bigg).
\end{align*}
\end{lemma}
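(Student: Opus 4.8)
The plan is to run the moment method already set up in \eqref{eq:tail_bound_moment}--\eqref{eq:tail_bound_moment_2}, but in a mixed-power form. Writing $Y_i\ed\|\Phi_{iu}\|_2^6$ and keeping the notation $Z\ed\big|\frac1n\sum_{i=1}^n(Y_i-\bEE{Y_i})\big|$, I would first reduce the lemma to the single moment estimate
\[
\|Z\|_{\cL_p}\le a\sqrt p+bp+cp^4,\qquad p\ge 2,
\]
with $a\asymp n^{-1/2}$, $b\asymp n^{-1}$, $c\asymp (nm^3)^{-1}$. Given such a bound, choosing for each $t>0$ the index $p^\star=\min\{(t/3ea)^2,\,t/3eb,\,(t/3ec)^{1/4}\}$ makes $e\|Z\|_{\cL_{p^\star}}\le t$, so \eqref{eq:tail_bound_moment} gives $\bPr{Z\ge t}\le e^{-p^\star}$, which is exactly $\exp(-C\min\{nt^2,nt,(m^3nt)^{1/4}\})$. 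The whole problem therefore becomes the mixed moment bound on $Z$.

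Next I would reduce to a per-sample computation. Since the entries of $\Phi_{iu}$ are i.i.d.\ $\cN(0,1/m)$, we may write $\Phi_{iu}=(g_{i1},\dots,g_{im})/\sqrt m$ with $g_{ij}\sim\cN(0,1)$ i.i.d., so $\|\Phi_{iu}\|_2^2=S_i/m$ with $S_i\ed\sum_{j}g_{ij}^2\sim\chi^2_m$ independent across $i$, and hence $Y_i=S_i^3/m^3$. The core estimate is the single-variable bound $\|S_1^3-\bEE{S_1^3}\|_{\cL_p}\lesssim(\sqrt{mp}+p)(m+p)^2$, which I would prove by centering ($\|S_1^3-\bEE{S_1^3}\|_{\cL_p}\le 2\|S_1^3-m^3\|_{\cL_p}$), factoring $S_1^3-m^3=(S_1-m)(S_1^2+S_1m+m^2)$, applying H\"older, and inserting the elementary $\chi^2_m$ moment bounds $\|S_1\|_{\cL_q}\lesssim m+q$ and $\|S_1-m\|_{\cL_q}\lesssim\sqrt{mq}+q$ (Bernstein for the sum of $m$ i.i.d.\ $\chi^2_1$ summands). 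Dividing by $m^3$ and simplifying the two regimes $p\le m$ and $p\ge m$ yields the clean form $\|Y_1-\bEE{Y_1}\|_{\cL_p}\lesssim 1+(p/m)^3$, together with $\sigma\ed\|Y_1-\bEE{Y_1}\|_{\cL_2}=O(1)$.

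Finally I would assemble $\|Z\|_{\cL_p}$ from the per-sample moments through a Rosenthal-type (Lata\l{}a) inequality for sums of i.i.d.\ centered variables of the shape $\big\|\frac1n\sum_i(Y_i-\bEE{Y_i})\big\|_{\cL_p}\lesssim \sqrt p\,\sigma/\sqrt n+(p/n)\|Y_1-\bEE{Y_1}\|_{\cL_p}$; with $\sigma=O(1)$ and $\|Y_1-\bEE{Y_1}\|_{\cL_p}\lesssim 1+(p/m)^3$ this produces precisely $a\sqrt p+bp+cp^4$ with the constants above. \textbf{The main obstacle is the subgaussian ($nt^2$) regime.} The naive Rosenthal inequality carries a factor $p/\log p$ on the variance term, which would degrade $nt^2$ into a mere subexponential bound; to keep the $\sqrt p$ scaling I would instead isolate the part of $Z$ linear in $S_i$, namely $\frac{3}{nm}\sum_{i,j}(g_{ij}^2-1)$ (a sum of $nm$ i.i.d.\ centered $\chi^2_1$ summands), and bound it \emph{directly} by Bernstein to get $\exp(-c\,nm\min\{t^2,t\})$, which already supplies the $nt^2$ and $nt$ terms. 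The quadratic remainder $\frac{3}{nm^2}\sum_i\big((S_i-m)^2-\bEE{(S_i-m)^2}\big)$ is dominated, while the cubic remainder $\frac{1}{nm^3}\sum_i\big((S_i-m)^3-\bEE{(S_i-m)^3}\big)$ carries the genuinely heavy tail and is handled purely by the moment method; it is this piece that produces the fourth-root exponent $(m^3nt)^{1/4}$. I expect the delicate step to be verifying that this heavy piece grows no faster than $p^4$ in $\cL_p$ (the one extra factor of $p$ beyond the optimal one-big-jump rate $p^3$, which is what makes the stated bound clean albeit slightly loose), and checking that the quadratic remainder's exponents $(mnt)^{1/2}$ and $(m^2nt)^{1/3}$ indeed dominate $\min\{nt^2,nt,(m^3nt)^{1/4}\}$ in every regime.
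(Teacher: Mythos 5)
Your proposal is correct, and its skeleton is the paper's: reduce to $V_i\ed m\|\Phi_{iu}\|_2^2\sim\chi^2_m$, bound per-sample $\cL_p$ norms, assemble via a Rosenthal-type inequality, and convert the moment bound to a tail bound through \eqref{eq:tail_bound_moment}--\eqref{eq:tail_bound_moment_2}. The one genuine divergence is your treatment of the subgaussian regime, and there your worry is misplaced: the version of Rosenthal's inequality the paper invokes (Pinelis \cite{Pinelis_TPA_1985}, stated as Lemma \ref{lem:Rosenthal}) already reads $\|\sum_i Z_i\|_{\cL_p}\le c\big(pn^{1/p}\|Z_1\|_{\cL_p}+\sqrt{pn}\,\|Z_1\|_{\cL_2}\big)$, with the $\sqrt{p}$ scaling on the variance term; the $p/\log p$ degradation you fear occurs only in the max-form (Johnson--Schechtman--Zinn) statement. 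Consequently a single application to $Z_i=V_i^3-\bEE{V_i^3}$ yields $\|Z\|_{\cL_p}\lesssim\max\{p/n^{1-1/p},\,p^4/(m^3n^{1-1/p}),\,\sqrt{p/n}\}$, i.e.\ all three regimes at once, and your linear/quadratic/cubic decomposition with Bernstein applied to $\frac{3}{nm}\sum_{i,j}(g_{ij}^2-1)$ is a valid but unnecessary detour (it does buy marginally stronger exponents, e.g.\ $nm\,t^2$ in place of $nt^2$, which the lemma does not claim; your dominance checks for the quadratic remainder's exponents are also correct). Two smaller points of comparison: for the per-sample moment, the paper computes exactly, $\bEE{V^{3p}}=2^{3p}\Gamma(3p+m/2)/\Gamma(m/2)$ together with $\Gamma(x+a)/\Gamma(x)\le e(x+a)^a$, giving $\|V_1^3-\bEE{V_1^3}\|_{\cL_p}\le 2^6(3p+m/2)^3$ --- equivalent up to constants to your centering/factorization/H\"older route, just shorter; and your writing $b\asymp n^{-1}$ silently drops the $n^{1/p}$ factor, which needs (and in the paper receives) the one-line justification that the $bp$ and $cp^4$ terms dominate only when $p$ is at least a fixed power of $n$, where $n^{1/p}=O(1)$ --- you should record the same caveat, along with the standard remark that $p^{\star}<2$ is absorbed into the constant $C$.
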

\begin{onlycnfr}
{In \cite{LR-CM-HT-21a}, we provide a simple, self-contained proof of this lemma.
We note that since $\|\Phi_{iu}\|_{2}^{6}$ is a polynomial in $m$ i.i.d. Gaussian random variables, results such as \cite[Theorem 1.3]{Adamczak_PTRF_2013} can be used to obtain tail bounds for some of the terms. The proof in \cite{LR-CM-HT-21a} is more straightforward.}
\end{onlycnfr}

\begin{onlyarxiv}
We note that since $\|\Phi_{iu}\|_{2}^{6}$ is a polynomial in $m$ i.i.d. Gaussian random variables, results such as \cite[Theorem 1.3]{Adamczak_PTRF_2013} can be used to obtain tail bounds for some of the terms. Our proof in Appendix \ref{app-proofs} is more straightforward.
\end{onlyarxiv}

\begin{onlyarxiv}
A similar result can be obtained for the $(1/n)\sum_{i=1}^{n}\|\Phi_{iu}\|_{2}^{4}$ term in \eqref{eq:sep} using the same technique, and we omit the proof for this result.
\begin{lemma}\label{lem:heavy_tailed_2}
For every $t>0$, there exists an absolute constant $C$ such that
\begin{align*}
\bPr{\bigg|\frac{1}{n}\sum_{i=1}^{n}(\|\Phi_{iu}\|_{2}^{4}-\bEE{\|\Phi_{iu}\|_{2}^{4}})\bigg|\ge t}
\le \exp\bigg(-C\min\bigg\{nt,(m^{2}nt)^{\frac{1}{3}},nt^{2}\bigg\}\bigg).
\end{align*}
\end{lemma}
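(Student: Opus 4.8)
\textit{Proof proposal (for Lemma~\ref{lem:heavy_tailed_2}).}
The plan is to mirror the moment-method argument sketched for Lemma~\ref{lem:heavy_tailed}, now with the exponent $6$ replaced by $4$. Set $Z\ed|\frac1n\sum_{i=1}^n W_i|$ with $W_i\ed\|\Phi_{iu}\|_2^4-\bEE{\|\Phi_{iu}\|_2^4}$, an i.i.d.\ mean-zero sum. Writing $g_{ij}\ed\sqrt m\,\Phi_{iu,j}\sim\cN(0,1)$ gives $\|\Phi_{iu}\|_2^2=G_i/m$ with $G_i\ed\sum_{j=1}^m g_{ij}^2\sim\chi^2_m$, so $W_i=G_i^2/m^2-\bEE{G_i^2/m^2}$. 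As in the $q=3$ case, the entire argument reduces to controlling the $\cL_p$ norm of a single squared chi-square jointly in $p$ and $m$, after which the moment-to-tail conversion \eqref{eq:tail_bound_moment_2} is applied termwise.

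First I would bound the single-term moment. From the central-moment estimate $\|\chi^2_m-m\|_{\cL_q}\lesssim\sqrt{mq}+q$ (a sum of $m$ i.i.d.\ centered subexponential variables, with an absolute constant) we get $\|G\|_{\cL_q}\le m+\|G-m\|_{\cL_q}\lesssim m+q$, hence
\begin{align*}
\|G^2/m^2\|_{\cL_p}=m^{-2}\|G\|_{\cL_{2p}}^2\lesssim m^{-2}(m+p)^2\lesssim 1+(p/m)^2.
\end{align*}
Since $\bEE{G^2/m^2}=1+2/m=O(1)$, this yields $\|W_1\|_{\cL_p}\lesssim 1+(p/m)^2$ for $p\ge1$, and in particular $\|W_1\|_{\cL_2}^2=\Var{\|\Phi_{iu}\|_2^4}=O(1)$ (the exact $\chi^2$ computation gives $\Theta(1/m)$, but the constant bound suffices here).

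Next I would assemble the sum via a Rosenthal-type moment inequality for i.i.d.\ mean-zero variables, which supplies a $\sqrt p$ factor on the variance term and a $p$ factor on the $\cL_p$ term, $\|\sum_iW_i\|_{\cL_p}\lesssim\sqrt{pn}\,\|W_1\|_{\cL_2}+p\,n^{1/p}\|W_1\|_{\cL_p}$. Dividing by $n$ and using the two single-term bounds above (with $n^{1/p}=O(1)$ in the relevant large-$p$ range) gives
\begin{align*}
\|Z\|_{\cL_p}\lesssim\sqrt{\frac pn}+\frac pn+\frac{p^3}{nm^2}.
\end{align*}
Feeding each monomial $c\,p^\beta$ into \eqref{eq:tail_bound_moment_2} produces a Weibull tail $\exp(-(t/ec)^{1/\beta})$: the $\sqrt{p/n}$ term ($\beta=\tfrac12$) gives $\exp(-Cnt^2)$, the $p/n$ term ($\beta=1$) gives $\exp(-Cnt)$, and the $p^3/(nm^2)$ term ($\beta=3$) gives $\exp(-C(m^2nt)^{1/3})$. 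Requiring each monomial to be at most $t/3e$ selects $p=\min\{cnt^2,cnt,c(m^2nt)^{1/3}\}$, and then $\bPr{Z\ge t}\le e^{-p}$ delivers the stated bound.

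The main obstacle is precisely the heavy-tailedness of $\|\Phi_{iu}\|_2^4$: its MGF is infinite, so no Chernoff/Bernstein argument applies and everything rests on \eqref{eq:tail_bound_moment}--\eqref{eq:tail_bound_moment_2}. The crux is extracting the exact joint growth $\|W_1\|_{\cL_p}\lesssim1+(p/m)^2$ --- the factor $m^{-2}$ is what places $m^2$ inside $(m^2nt)^{1/3}$, and the quadratic growth in $p$ (which becomes cubic after the $p$ prefactor of the moment inequality) is what fixes the Weibull exponent $1/3$. Some care is also needed to justify the differing prefactors ($\sqrt p$ versus $p$) and to absorb the $n^{1/p}$ factor for small $p$; using the crude constant bound on the variance rather than its true $1/m$ scaling is what yields the claimed $nt^2$ in place of a sharper $mnt^2$, which is harmless since the minimum is what appears in the exponent.
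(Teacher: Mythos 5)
Your proposal is correct and follows essentially the same route as the paper's (omitted) proof, which simply adapts the argument for Lemma \ref{lem:heavy_tailed} with exponent $4$ in place of $6$: Pinelis's form of Rosenthal's inequality combined with the moment-to-tail conversion \eqref{eq:tail_bound_moment_2}, with your single-term bound $\|W_1\|_{\cL_p}\lesssim(1+p/m)^2$ playing exactly the role of the paper's $2^{6}(3p+m/2)^{3}$ (the paper derives it from exact $\chi^2_m$ moments via $\Gamma(x+a)/\Gamma(x)\le e(x+a)^{a}$, you from the centered subexponential estimate $\|\chi^2_m-m\|_{\cL_q}\lesssim\sqrt{mq}+q$ --- an immaterial difference). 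Your remaining steps, including the crude $O(1)$ bound on the variance term (yielding $nt^2$ rather than a sharper $mnt^2$) and the loose absorption of the $n^{1/p}$ factor, mirror the paper's own treatment.
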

\end{onlyarxiv}

Together with the fact that $\bEE{\|\Phi_{iu}\|_{2}^{4}}=1+2/m$ and $\bEE{\|\Phi_{iu}\|_{2}^{6}}=1+6/m+8/m^{2}$, the results above
give upper and lower bounds that hold with high probability on all but the $\max_{i\in[n]}\|\Phi_{iu^{\prime}}\|_{2}^{2}$  term in \eqref{eq:sep}. The latter can be bounded with high probability using concentration for chi-squared random variables and a union bounding step, as given by the following lemma.
\begin{onlycnfr}
The proof can be found in \cite{LR-CM-HT-21a}.
\end{onlycnfr}

\begin{lemma}\label{lem:max_chi_squared}
Let $\mu_{\max}\ed\bEE{\max_{i\in[n]}\|\Phi_{iu}\|_{2}^{2}}$. Then, for every $t>0$,
\begin{align*}
\bPr{\max_{i\in[n]}\|\Phi_{iu}\|_{2}^{2}\ge \mu_{\max}+t}
\le n\exp\bigg(\frac{-m}{8}\min\bigg\{(\mu_{\max}+t-1)^{2},\mu_{\max}+t-1\bigg\}\bigg).
\end{align*}
\end{lemma}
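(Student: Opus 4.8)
The plan is to reduce the maximum over the $n$ columns to a single-column upper-tail estimate via a union bound, and then to establish the required one-dimensional concentration inequality by a standard Chernoff argument. Writing $S_{i}\ed\|\Phi_{iu}\|_{2}^{2}$, note that under Assumption~\ref{assump_phi} each entry of $\Phi_{iu}$ is $\cN(0,1/m)$, so $mS_{i}\sim\chi_{m}^{2}$ and in particular $\bEE{S_{i}}=1$; the $S_{1},\ldots,S_{n}$ are moreover i.i.d. Since $\{\max_{i\in[n]}S_{i}\ge\mu_{\max}+t\}=\bigcup_{i\in[n]}\{S_{i}\ge\mu_{\max}+t\}$, a union bound gives
\begin{align*}
\bPr{\max_{i\in[n]}S_{i}\ge\mu_{\max}+t}\le n\,\bPr{S_{1}\ge\mu_{\max}+t},
\end{align*}
so it suffices to prove the single-column bound $\bPr{S_{1}\ge 1+s}\le\exp(-\tfrac{m}{8}\min\{s^{2},s\})$ with $s\ed\mu_{\max}+t-1>0$ (positive because $\mu_{\max}\ge\bEE{S_{1}}=1$ and $t>0$).

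For the single-column bound I would run the Chernoff method using the closed-form MGF of the chi-squared law. For $0<\lambda<m/2$ we have $\bEE{e^{\lambda S_{1}}}=(1-2\lambda/m)^{-m/2}$, hence for any $s>0$,
\begin{align*}
\bPr{S_{1}\ge 1+s}\le\inf_{0<\lambda<m/2}e^{-\lambda(1+s)}\,(1-2\lambda/m)^{-m/2}.
\end{align*}
Substituting the minimizer $\lambda=(m/2)\cdot s/(1+s)$ collapses the right-hand side to $\exp(-\tfrac{m}{2}(s-\log(1+s)))$, which is the sharp sub-gamma form for a normalized chi-squared deviation.

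It then remains to pass from $s-\log(1+s)$ to the stated $\tfrac14\min\{s^{2},s\}$, which is the only genuinely computational step and is elementary: one checks $s-\log(1+s)\ge\tfrac14\min\{s^{2},s\}$ for all $s>0$ by splitting at $s=1$. On $(0,1]$ the function $s-\log(1+s)-s^{2}/4$ vanishes at $0$ and has derivative $s(1-s)/(2(1+s))\ge 0$, so it is nonnegative; on $[1,\infty)$ one instead verifies $\tfrac34 s\ge\log(1+s)$, which holds at $s=1$ and whose derivative $\tfrac34-\tfrac{1}{1+s}$ is positive there. Combining this inequality with the Chernoff bound yields $\bPr{S_{1}\ge 1+s}\le\exp(-\tfrac{m}{8}\min\{s^{2},s\})$, and substituting $s=\mu_{\max}+t-1$ together with the union bound above gives exactly the claimed estimate. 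I do not expect any real obstacle here: unlike the heavy-tailed estimates behind Lemma~\ref{lem:heavy_tailed}, the variable $\|\Phi_{iu}\|_{2}^{2}$ has a finite MGF, so the standard sub-exponential machinery applies directly and the argument is routine.
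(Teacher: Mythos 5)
Your proposal is correct and follows essentially the same route as the paper's proof: a union bound over the $n$ columns combined with the standard chi-squared upper-tail estimate $\bPr{\|\Phi_{iu}\|_{2}^{2}-1\ge t^{\prime}}\le\exp\big(-\tfrac{m}{8}\min\{t^{\prime 2},t^{\prime}\}\big)$ applied at $t^{\prime}=\mu_{\max}+t-1$. The only difference is that the paper invokes this chi-squared bound as a known fact, while you derive it in full via the exact MGF and the elementary inequality $s-\log(1+s)\ge\tfrac14\min\{s^{2},s\}$, which is a welcome (and correct) elaboration rather than a departure.
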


To ensure that the random variable on the left hand side of \eqref{eq:sep} exceeds the one on the right hand side with large probability, we can substitute the bounds we derived for each term, and check when the inequality holds. This results (up to some constant loss in the $\delta$ factor) in a condition on the problem parameters under which \eqref{eq:sep} holds for a fixed $(u,u^{\prime})\in\cS\times \cS^{c}$. Applying a union bound over all $k(d-k)$ pairs gives the final requirement on $n$. 
\begin{onlycnfr}
Note that the leading terms on the right hand side of \eqref{eq:sep} would roughly be $\sqrt{(k^{2}/m^{2}n)\log d/\delta}$ or $\sqrt{(k/mn)\log d/\delta}$ (assuming $m\ge 2 \log (d/\delta)$, see \cite{LR-CM-HT-21a} for details of the proof), while the left hand side would roughly be a constant, leading to the following result.
\end{onlycnfr}
\begin{onlyarxiv}
Note that the leading terms on the right hand side of \eqref{eq:sep} would roughly be $\sqrt{(k^{2}/m^{2}n)\log d/\delta}$ or $\sqrt{(k/mn)\log d/\delta}$ (assuming $m\ge 2 \log (d/\delta)$, see the proof in Appendix \ref{app-proofs} for details), while the left hand side would roughly be a constant, leading to the following result.
\end{onlyarxiv}

\begin{lemma}\label{lem:sep}
The separation condition \eqref{eq:sep} holds for every $(u,u^{\prime})\in\cS\times \cS^{c}$, with probability at least $1-\delta$, provided $m\ge 2\log(d/\delta)$ and
\begin{align*}
n\ge c\frac{x_{max}^{4}}{x_{\min}^{4}}\max\bigg\{\bigg(\frac{k}{m}+\frac{\sigma^{2}}{x_{\max}^{2}}\bigg)\log\frac{d}{\delta},
\bigg(\frac{k}{m}+\frac{\sigma^{2}}{x_{\max}^{2}}\bigg)^{2}\log\frac{d}{\delta}\bigg\},
\end{align*}
for an absolute constant $c$.
\end{lemma}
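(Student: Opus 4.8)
\emph{Proof plan.} The plan is to show that, with probability at least $1-\delta$, the left-hand side of \eqref{eq:sep} is bounded below by a positive constant multiple of $x_{\min}^{2}/x_{\max}^{2}$, while the four terms on the right-hand side are simultaneously bounded above by quantities that decay with $n$; comparing the two then pins down the required sample complexity. Throughout I would use the moment facts $\bEE{\|\Phi_{iu}\|_{2}^{2}}=1$, $\bEE{\|\Phi_{iu}\|_{2}^{4}}=1+2/m$, and $\bEE{\|\Phi_{iu}\|_{2}^{6}}=1+6/m+8/m^{2}$, all of which are $\Theta(1)$.

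First, for the left-hand side, the mean of $\frac{1}{n}\sum_{i}(\|\Phi_{iu}\|_{2}^{4}-\frac{1}{m}\|\Phi_{iu}\|_{2}^{2})$ equals $(1+2/m)-1/m=1+1/m\ge 1$. I would invoke Lemma \ref{lem:heavy_tailed_2} for the fourth-power term and standard chi-squared concentration for the $\|\Phi_{iu}\|_{2}^{2}$ term to show that this empirical average concentrates around its mean; choosing the deviation to be, say, $1/2$ gives a lower bound of $\tfrac{1}{2}\,x_{\min}^{2}/x_{\max}^{2}$ with high probability, provided $n$ exceeds an absolute constant times $\log(d/\delta)$.

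Second, for the three square-root terms on the right-hand side, I would again use Lemmas \ref{lem:heavy_tailed} and \ref{lem:heavy_tailed_2} to bound $\frac{1}{n}\sum_{i}\|\Phi_{iu}\|_{2}^{4}$ and $\frac{1}{n}\sum_{i}\|\Phi_{iu}\|_{2}^{6}$ (and their $u'$ analogues) by absolute constants, i.e.\ mean plus an $O(1)$ deviation. This makes the two $(\tfrac{k}{m}+\tfrac{\sigma^{2}}{x_{\max}^{2}})^{2}$ terms scale as $\sqrt{(\tfrac{k}{m}+\tfrac{\sigma^{2}}{x_{\max}^{2}})^{2}\log(d/\delta)/n}$ and the middle term as $\sqrt{(\tfrac{k}{m}+\tfrac{\sigma^{2}}{x_{\max}^{2}})\log(d/\delta)/n}$. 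For the fourth term I would apply Lemma \ref{lem:max_chi_squared}, whose role is to show $\max_{i}\|\Phi_{iu'}\|_{2}^{2}=O(1)$ with high probability; this is where the hypothesis $m\ge 2\log(d/\delta)$ is consumed, since it keeps $\mu_{\max}$ and the fluctuation of the maximum over the $n$ coordinates at constant order. The fourth term is then $O((\tfrac{k}{m}+\tfrac{\sigma^{2}}{x_{\max}^{2}})\log(d/\delta)/n)$, i.e.\ of order $1/n$ rather than $1/\sqrt{n}$.

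Finally, I would require each right-hand-side term to be at most a quarter of the lower bound $\tfrac{1}{2}\,x_{\min}^{2}/x_{\max}^{2}$, so that their sum stays below it. The two binding constraints come from the square-root terms and give $n\gtrsim \tfrac{x_{\max}^{4}}{x_{\min}^{4}}(\tfrac{k}{m}+\tfrac{\sigma^{2}}{x_{\max}^{2}})^{2}\log(d/\delta)$ and $n\gtrsim \tfrac{x_{\max}^{4}}{x_{\min}^{4}}(\tfrac{k}{m}+\tfrac{\sigma^{2}}{x_{\max}^{2}})\log(d/\delta)$; their maximum is exactly the claimed bound, while the $O(1/n)$ fourth term imposes a strictly weaker condition. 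To obtain the statement uniformly over all pairs $(u,u')\in\cS\times\cS^{c}$, I would choose the deviation levels so that each tail probability is at most $\delta/(c\,k(d-k))$ and union bound over the $k(d-k)\le d^{2}$ pairs, the $\log(d/\delta)$ factor in the sample-complexity bound absorbing this cost. The main obstacle I anticipate is the bookkeeping around the fourth (maximum) term: one must check that $m\ge 2\log(d/\delta)$ together with the lower bound on $n$ keeps $\max_{i}\|\Phi_{iu'}\|_{2}^{2}$ at constant order, and relatedly that the active branch of the $\min\{nt,(m^{3}nt)^{1/4},nt^{2}\}$ exponent in Lemma \ref{lem:heavy_tailed} is large enough for constant-scale deviations to hold with the required probability, which needs $m^{3}n\gtrsim \log^{4}(d/\delta)$ and is guaranteed by the hypotheses.
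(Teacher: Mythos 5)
Your plan follows the paper's own proof essentially step for step: lower-bounding the left side of \eqref{eq:sep} by a constant multiple of $x_{\min}^{2}/x_{\max}^{2}$ via Lemma \ref{lem:heavy_tailed_2} and chi-squared concentration, bounding the three square-root terms through Lemmas \ref{lem:heavy_tailed} and \ref{lem:heavy_tailed_2} and the maximum term through Lemma \ref{lem:max_chi_squared} (with $m\ge 2\log(d/\delta)$ consumed exactly where the paper uses $m\ge\log(1/\delta^{\prime})$), and finishing with a union bound over the $k(d-k)$ pairs at level $\delta^{\prime}=\delta/(k(d-k))$ with $\log(1/\delta^{\prime})\le 2\log(d/\delta)$. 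The proposal is correct, and your explicit flagging of the $(m^{3}nt)^{1/4}$ branch and the $O(1/n)$ nature of the maximum term matches the bookkeeping the paper handles implicitly.
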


By defining $E$ as the event that the measurement matrices satisfy condition \eqref{eq:sep} for every $(u,u^{\prime})\in\cS\times \cS^{c}$, we can see that the probability of error in \eqref{eq:err} is at most $\delta$, provided $n$ satisfies the condition in Lemma \ref{lem:sep}. This completes the proof of Theorem \ref{thm}.

\section{Discussion}\label{sec:discussion}
We showed a phase transition for the problem of support recovery from multiple samples.
While the closed form estimator that we analyzed here is sample-optimal, it would be interesting to design other estimators that can work in the measurement-constrained regime without knowledge of the support size, and for which guarantees can be obtained with worst-case inputs.  
Finally, extending the lower bound on $n^{*}$ to include the $1<k/m\le \gamma$ regime for $\gamma>1$ would provide a better understanding of the problem.

\begin{appendices}
\centering
\section{}

\begin{lemma-app}\label{lem:chi_squared}
Let $X_{1}, \ldots,X_{n}$ be drawn i.i.d. from $\cN(\mu_{i},\sigma_{i}^{2})$. Then, for every $t>0$,
\begin{align*}
\bPr{\frac{1}{n}\sum_{i=1}^{n}X_{i}^{2}
\le \frac{1}{n}\sum_{i=1}^{n}(\sigma_{i}^{2}+\mu_{i}^{2})- t}
\le
\exp\bigg(\frac{-n^{2}t^{2}}{4\sum_{i=1}^{n}(\sigma_{i}^{4}+\sigma_{i}^{2}\mu_{i}^{2})}\bigg),
\end{align*}
and
\begin{align*}
\bPr{\frac{1}{n}\sum_{i=1}^{n}X_{i}^{2}
\ge \frac{1}{n}\sum_{i=1}^{n}(\sigma_{i}^{2}+\mu_{i}^{2})+ t}
\le \exp\bigg(-\min\bigg\{\frac{n^{2}t^{2}}{16\sum_{i=1}^{n}(\sigma_{i}^{4}+\sigma_{i}^{2}\mu_{i}^{2})},\frac{nt}{8\underset{i \in [n]}{\max}~\sigma_{i}^{2}}\bigg\}\bigg).
\end{align*}
\end{lemma-app} 
\begin{onlyarxiv}
\begin{proof}
The proof is similar to that of \cite{Birge_AoS_2001} for $\sigma^{2}=1$, and follows by upper bounding the MGF of a noncentral chi-squared random variable and then using the Chernoff method. We include the proof here for completeness.
We will first show the left tail bound. To that end, we note that for $t>0$ and $\lambda<0$, the following holds for $Y\ed (1/n)\sum_{i=1}^{n}X_{i}^{2}$:
\begin{align*}\label{eq:chernoff}
\bPr{Y\le \bEE{Y}-t}
\le e^{\lambda t}\bEE{e^{\lambda(Y-\bEE{Y})}}.\numberthis
\end{align*}
To upper bound the MGF, first note that for $X\sim\cN(\mu,\sigma^{2})$,
\begin{align*}
\bEE{e^{\lambda (X^{2}-\bEE{X^{2}})}}
&=e^{-\lambda(\sigma^{2}+\mu^{2})}\frac{1}{\sqrt{2\pi}\sigma}\int_{-\infty}^{\infty}e^{\lambda x^{2}}e^{\frac{-(x-\mu)^{2}}{2\sigma^{2}}}dx\\
&=\frac{e^{-\lambda(\sigma^{2}+\mu^{2})}}{\sqrt{1-2\lambda\sigma^{2}}}e^{\frac{\lambda \mu^{2}}{1-2\lambda\sigma^{2}}},
\end{align*}
 for all $\lambda<1/2\sigma^{2}$. Taking logarithms we have
 \begin{align*}\label{eq:log_mgf}
 \log \bEE{e^{\lambda (X^{2}-\bEE{X^{2}})}}
 &=\frac{1}{2}\bigg(-\log (1-2\lambda\sigma^{2}-2\lambda\sigma^{2})\bigg)+\frac{2\lambda^{2}\mu^{2}\sigma^{2}}{1-2\lambda\sigma^{2}}\numberthis \\
 &\le \lambda^{2}\sigma^{4}+\frac{2\lambda^{2}\mu^{2}\sigma^{2}}{1-2\lambda\sigma^{2}}\\
 &\le \lambda^{2}(\sigma^{4}+2\mu^{2}\sigma^{2}),
 \end{align*}
 where we used $-\log(1-x)-x\le x^{2}/2$ for $x<0$ in the second step. This gives
 \begin{align*}
 \log \bEE{e^{\lambda (Y-\bEE{Y})}}
 &\le \frac{\lambda^{2}}{n^{2}}\sum_{i=1}^{n}(\sigma_{i}^{4}+\sigma_{i}^{2}\mu_{i}^{2})
 \end{align*}
 which upon substituting into \eqref{eq:chernoff} and optimizing over $\lambda<0$ gives $\lambda=-n^{2}t/(2\sum_{i=1}^{n}(\sigma_{i}^{4}+\sigma_{i}^{2}\mu_{i}^{2})))$ resulting in the left tail bound claimed in the lemma.
 
 For the right tail bound, we continue from \eqref{eq:log_mgf} and note that for $0\le \lambda\le 1/4\sigma^{2}$,
 \begin{align*}
 \log \bEE{e^{\lambda(X^{2}-\bEE{X^{2}})}}
 &\le \frac{2\lambda^{2}\sigma^{4}}{1-2\lambda\sigma^{2}}+\frac{2\lambda^{2}\mu^{2}\sigma^{2}}{1-2\lambda\sigma^{2}}\\
 &\le 4\lambda^{2}(\sigma^{4}+\mu^{2}\sigma^{2}),
 \end{align*} 
 where in the first step we used $-\log(1-x)-x\le x^{2}/2(1-x)$ for all $x\in[0,1)$. Extending as before to the normalized sum $(1/n)\sum_{i=1}^{n}X_{i}^{2}$, substituting into \eqref{eq:chernoff} and optimizing over $\lambda\in[0,1/4\sigma^{2})$, it can be seen that the minimum is attained at $\lambda=nt/(8\sum_{i=1}^{n}(\sigma_{i}^{4}+\mu_{i}^{2}\sigma^{2}))$  if $t<2\sum_{i=1}^{n}(\sigma^{2}+\mu^{2})$, and at $\lambda=1/(4\sum_{i=1}^{n}\sigma_{i}^{2})$ otherwise. This gives the right tail bound claimed in the lemma.
\end{proof}
\end{onlyarxiv}

\begin{onlyarxiv}
\section{}\label{app-proofs}

\begin{proof}[Proof of Lemma \ref{lem:heavy_tailed}]
First, note that
\begin{align}\label{eq:moment_bound_pf}
\bigg\|\frac{1}{n}\sum_{i=1}^{n}(\|\Phi_{iu}\|_{2}^{6}-\bEE{\|\Phi_{iu}\|_{2}^{6}})\bigg\|_{\cL_{p}}
=\frac{1}{nm^{3}}\bigg\|\sum_{i=1}^{n}(V_{i}^{3}-\bEE{V_{i}^{3}})\bigg\|_{\cL_{p}},
\end{align}
where $V_{i}\ed m\|\Phi_{iu}\|_{2}^{2}\sim\chi_{m}^{2}$, and $\chi_{m}^{2}$ denotes the chi-square distribution with $m$ degrees of freedom. 
To bound the moment of the sum, we use the following form of Rosenthal's inequality stated in \cite{Pinelis_TPA_1985}.

\begin{lemma-app}[{\hspace{1sp}\cite{Pinelis_TPA_1985}}]\label{lem:Rosenthal}
Let $Z_{1},\ldots,Z_{n}$ be independent and identically distributed random variables with mean zero. Then, for every $p\ge 2$,
\begin{align*}
\bigg\|\sum_{i=1}^{n}Z_{i}\bigg\|_{\cL_{p}}
\le c\bigg(pn^{\frac{1}{p}}\|Z_{1}\|_{\cL_{p}}+\sqrt{pn}\|Z_{1}\|_{\cL_{2}}\bigg),
\end{align*}
for an absolute constant $c$.
\end{lemma-app}
In view of Lemma \ref{lem:Rosenthal}, we now upper bound the $\cL_{p}$ norm of each summand on the right side of \eqref{eq:moment_bound_pf} as follows:
\begin{align*}
\|V_{i}^{3}-\bEE{V_{i}^{3}}\|_{\cL_{p}}
&\le \|V_{i}^{3}\|_{\cL_{p}}+\bEE{V_{i}^{3}}\\
&=(\bEE{V_{i}^{3p}})^{\frac{1}{p}}+\bEE{V_{i}^{3}}\\
&=\bigg(2^{3p}\frac{\Gamma(3p+m/2)}{\Gamma(m/2)}\bigg)^{\frac{1}{p}}+2^{3}\frac{\Gamma(3+m/2)}{\Gamma(m/2)}\\
&\le 2^{3}\bigg(e^{\frac{1}{p}}(3p+m/2)^{3}+e(3+m/2)^{3}\bigg)\\
&\le 2^{6}(3p+m/2)^{3},
\end{align*}
where we used the fact that $V_{i}\sim\chi_{m}^{2}$ in the third step and $\Gamma(x+a)/\Gamma(x)\le e(x+a)^{a}$ for all $x\ge 1$, $a>0$ in the fourth step.
Together with Lemma \ref{lem:Rosenthal},  
this yields for $p\ge 2$,
\begin{align*}\label{eq:moment_bound}
\bigg\|\frac{1}{n}\sum_{i=1}^{n}(\|\Phi_{iu}\|_{2}^{6}-\bEE{\|\Phi_{iu}\|_{2}^{6}})\bigg\|_{\cL_{p}}
&\le \frac{c2^{6}}{nm^{3}}\bigg(pn^{\frac{1}{p}}(3p+m/2)^{3}+\sqrt{pn}(6+m/2)^{3}\bigg)\\
&\le c2^{6}\bigg(\frac{p}{n^{1-\frac{1}{p}}}\max\bigg\{1,\frac{(6p)^{3}}{m^{3}}\bigg\}+7^{3}\sqrt{\frac{p}{n}}\bigg)\\
&\le c^{\prime}\max\bigg\{\frac{p}{n^{1-\frac{1}{p}}},\frac{p^{4}}{m^{3}n^{1-\frac{1}{p}}},\sqrt{\frac{p}{n}}\bigg\}\numberthis.
\end{align*}

Note that from \eqref{eq:tail_bound_moment}, we expect $p$ to be of the form $n^{c^{\prime\prime}}$ for some constant $c^{\prime\prime}$, in which case $p/n^{1-\frac{1}{p}}=(p/n)e^{\frac{1}{ec^{\prime\prime}}}$. We focus on this regime, and obtain using \eqref{eq:tail_bound_moment_2} and \eqref{eq:moment_bound},
\begin{align*}
\bPr{\bigg|\frac{1}{n}\sum_{i=1}^{n}(\|\Phi_{iu}\|_{2}^{6}-\bEE{\|\Phi_{iu}\|_{2}^{6}})\ge t\bigg|}
\le \exp\bigg(-C\min\bigg\{nt,(m^{3}nt)^{\frac{1}{4}},nt^{2}\bigg\}\bigg),
\end{align*}
for every $t>0$.
\end{proof}

\end{onlyarxiv}
\begin{onlyarxiv}
\begin{proof}[Proof of Lemma \ref{lem:max_chi_squared}]
Let $\mu_{\max}=\max_{i\in[n]}\|\Phi_{iu}\|_{2}^{2}$. The proof follows by noting that for every $t>0$,
\begin{align*}
\bPr{\max_{i\in[n]}\|\Phi_{iu}\|_{2}^{2}\ge\mu_{\max}+ t}\le \sum_{i=1}^{n}\bPr{\|\Phi_{iu}\|_{2}^{2}-1\ge t^{\prime}},
\end{align*}
where $t^{\prime}=\mu_{\max}+t-1$, and using the fact that $m\|\Phi_{iu}\|_{2}^{2}\sim\chi_{m}^{2}$ to get
\begin{align*}
\bPr{\max_{i\in[n]}\|\Phi_{iu}\|_{2}^{2}\ge t}
\le\exp\bigg(-\frac{m}{8}\min\{t^{\prime 2},t^{\prime}\}\bigg).
\end{align*}
\end{proof}
\end{onlyarxiv}
\begin{onlyarxiv}
\begin{proof}[Proof of Lemma \ref{lem:sep}]
The proof involves finding upper and lower bounds, respectively, on the left-hand side and right-hand side of \eqref{eq:sep} that hold with high probability, and then simplifying to obtain the condition on $n$ stated in the lemma. 
Note that there are two probability of error parameters here, one from the criterion in \eqref{eq:err}, and another required for \eqref{eq:sep}. To avoid confusion, will use $\delta$ for the former and $\delta^{\prime}$ for the latter (we will eventually set $\delta^{\prime}=\delta/(k(d-k))$).

For the left-hand side of \eqref{eq:sep}, it follows from Lemma \ref{lem:heavy_tailed_2} that
\begin{align*}
\bPr{\frac{1}{n}\sum_{i=1}^{n}\|\Phi_{iu}\|_{2}^{4}\ge 1+\frac{2}{m}-t}\le \delta^{\prime},
\end{align*}
when
\begin{align*}
t\ge \frac{1}{C}\max\bigg\{\frac{1}{n}\log\frac{1}{\delta^{\prime}},\frac{1}{nm^{2}}\bigg(\log\frac{1}{\delta^{\prime}}\bigg)^{3},\sqrt{\frac{1}{n}\log\frac{1}{\delta^{\prime}}}\bigg\},
\end{align*}
where the maximum in th expression above is the third term provided $m>\log(1/\delta^{\prime})$.
Further, since $m\|\Phi_{iu}\|_{2}^{2}\sim\chi_{m}^{2}$, we have
\begin{align*}
\bPr{\frac{1}{mn}\sum_{i=1}^{n}\|\Phi_{iu}\|_{2}^{2}\ge \frac{1}{m}-t}\le \delta^{\prime},
\end{align*}
when
\begin{align*}
t\ge \frac{2}{m}\bigg(\sqrt{\frac{1}{mn}\log\frac{1}{\delta^{\prime}}}+\frac{1}{mn}\log\frac{1}{\delta^{\prime}}\bigg).
\end{align*}
It follows that the left-hand side of \eqref{eq:sep} is at least $cx_{\min}^{2}/x_{\max}^{2}$ with probability at least $2\delta^{\prime}$, for an absolute constant $c$, provided $m>\log(1/\delta^{\prime})$.


We now proceed to find a high probability upper bound on the right-hand side of \eqref{eq:sep}.
Lemmas \ref{lem:heavy_tailed} and \ref{lem:heavy_tailed_2} can be used to upper bound the first three terms, and Lemma \ref{lem:max_chi_squared} can be used for the last term. In particular, we have
\begin{align*}
\bPr{\frac{1}{n}\sum_{i=1}^{n}\|\Phi_{iu}\|_{2}^{6}\ge 1+\frac{6}{m}+\frac{8}{m^{2}}+t}\le \delta^{\prime},
\end{align*}
when
\begin{align*}
t\ge \frac{1}{C}\max\bigg\{\frac{1}{n}\log\frac{1}{\delta^{\prime}},\frac{1}{nm^{3}}\bigg(\log\frac{1}{\delta^{\prime}}\bigg)^{4},\sqrt{\frac{1}{n}\log\frac{1}{\delta^{\prime}}}\bigg\}.
\end{align*}
Further,
\begin{align*}
\bPr{\frac{1}{n}\sum_{i=1}^{n}\|\Phi_{iu}\|_{2}^{4}\ge 1+\frac{2}{m}+t}\le \delta^{\prime},
\end{align*}
when
\begin{align*}
t\ge \frac{1}{C}\max\bigg\{\frac{1}{n}\log\frac{1}{\delta^{\prime}},\frac{1}{nm^{2}}\bigg(\log\frac{1}{\delta^{\prime}}\bigg)^{3},\sqrt{\frac{1}{n}\log\frac{1}{\delta^{\prime}}}\bigg\},
\end{align*}
and
\begin{align*}
&\bPr{\max_{i\in[n]}\|\Phi_{iu}\|_{2}^{2}\ge 1+\max\bigg\{\sqrt{\frac{8}{m}\log\frac{n}{\delta^{\prime}}},\frac{8}{m}\log\frac{n}{\delta^{\prime}}\bigg\}}\le\delta^{\prime}. 
\end{align*}
To simplify the right-hand side of \eqref{eq:sep}, note that after substituting the bounds above, the leading terms arise from the mean of $\Phi_{i}$ dependent terms (i.e. normalized sum and the normalized maximum), which is roughly $1$. 
In particular, we see that the leading terms are roughly $k/mn\cdot\log(1/\delta)$ and $k^{2}/m^{2}\cdot\log(d/\delta)$, provided $m\ge \log(1/\delta^{\prime})$ (this condition ensures that the deviation terms for the $\Phi_{i}$ dependent terms are small). 
Using this observation and recalling that the left-hand side in \eqref{eq:sep} is a constant gives, after simplification, that \eqref{eq:sep} holds for a fixed $(u,u^{\prime})\in\cS\times\cS^{c}$ with probability at least $1-\delta^{\prime}$, provided $m\ge\log(1/\delta^{\prime})$ and
\begin{align*}
n\ge c\frac{x_{max}^{4}}{x_{\min}^{4}}\max\bigg\{\bigg(\frac{k}{m}+\frac{\sigma^{2}}{x_{\max}^{2}}\bigg)\log\frac{d}{\delta},\bigg(\frac{k}{m}+\frac{\sigma^{2}}{x_{\max}^{2}}\bigg)^{2}\log\frac{d}{\delta}\bigg\},
\end{align*}
for an absolute constant $c$. We now apply a union bound over all pairs $(u,u^{\prime})$ and choose $\delta^{\prime}=\delta/(k(d-k))$. Finally, noting that $\log(1/\delta^{\prime})\le 2\log(d/\delta)$ , gives us the result stated in the lemma.
\end{proof}
\end{onlyarxiv}


\end{appendices}


\bibliography{IEEEabrv,bibfile,bibJournalList}
\bibliographystyle{IEEEtranS}

\end{document}